\documentclass{article}
\usepackage{graphicx} 

\usepackage[english]{babel}
\usepackage[utf8]{inputenc}
\usepackage{amsmath,amsfonts,amssymb,amsthm}
\usepackage{tabto}
\usepackage[colorinlistoftodos]{todonotes}
\usepackage{blindtext}
\usepackage{subcaption}
\usepackage{caption}
\usepackage[nodisplayskipstretch]{setspace}
\usepackage{tikz-cd} 
\usepackage{tensor}
\usepackage{mathrsfs}
\usepackage{hyperref}
\usepackage[style=ext-numeric-comp,maxcitenames=2,uniquename=false,backend=biber,uniquelist=false,articlein=false]{biblatex}
\usepackage[usenames,dvipsnames]{xcolor}
\usepackage{enumitem}
\usepackage{eucal}

\DeclareCiteCommand{\citeyear}
    {}
    {\bibhyperref{\printdate}}
    {\multicitedelim}
    {}

\hypersetup{
	colorlinks=true,         
	linkcolor=Black,          
	citecolor=MidnightBlue,
	urlcolor=MidnightBlue            
 }

\newtheorem{thm}{Theorem}

\makeatletter
\renewcommand\section{\@startsection{section}{1}{\z@}{-3.25ex plus -1ex minus -.2ex}{1.5ex plus .2ex}{\normalsize\bf}}
\renewcommand\subsection{\@startsection{subsection}{2}{\z@}{-3.25ex plus -1ex minus -.2ex}{1.5ex plus .2ex}{\normalsize\bf}}
\renewcommand\subsubsection{\@startsection{subsubsection}{3}{\z@}{-3.25ex plus -1ex minus -.2ex}{1.5ex plus .2ex}{\normalsize\bf}}
\makeatother

\title{Two routes to the Hadamard condition}
\author{Eleanor March\footnote{Department of Logic and Philosophy of Science, University of California, Irvine, CA. eleanor.march@uci.edu}}
\date{}

\begin{document}

\maketitle

\begin{abstract}
    In quantum field theory on curved spacetime, the Hadamard condition is often presented as a necessary condition on `physically reasonable', `vacuum-like' states of linear quantum fields, and plays a central role in many theoretical and foundational applications---ranging from proofs of the renormalizability of Wick polynomials to derivations of the Hawking temperature. In this paper, I strengthen, generalise, and bring under technical control two popular motivations for the Hadamard condition---one which appeals to the role of Hadamard states in defining physical quantities such as Wick polynomials and the expectation value of the stress--energy operator on general curved spacetime backgrounds, and one which appeals to the equivalence principle. This clarifies the role and status of the Hadamard condition, including its relationship to the equivalence principle, to well-definedness of physical quantities such as Wick polynomials and the expectation value of the stress--energy operator, and the sense in which Hadamard states are `vacuum-like'. 
\end{abstract}

\section{Introduction}\label{sec:intro}

In quantum field theory on curved spacetime, the Hadamard condition is often presented as a necessary condition on `physically reasonable', `vacuum-like' states of linear quantum fields. Hadamard states permit the definition of important physical quantities such as the expectation value of the stress--energy operator on general curved spacetime backgrounds \parencite{Wald1977,Wald1978,Wald1994,BrunettiFredenhagen2000}, and so play a central role in semi-classical gravity. Many foundational results in the theory of quantum fields on curved spacetimes---ranging from derivations of Hawking radiation and the Hawking temperature \parencite{JanssenVerch2023,KayWald1991}, to quantum energy inequalities \parencite{Fewster2000,FewsterVerch2002,Fewster2017}, to proofs of the renormalizability of Wick polynomials and time-ordered products \parencite{HollandsWald2001,HollandsWald2002}---rely on the Hadamard condition.

For these reasons, providing a robust motivation for the Hadamard condition---one which can underwrite the claim that `physically reasonable', `vacuum-like' states of linear quantum fields are necessarily Hadamard---is a task of considerable theoretical and foundational importance. Standardly, one finds two arguments to this effect in the literature:\footnote{There is also a more recent argument due to \textcite{FewsterVerch2013}, which I do not discuss here.} one (i) which appeals directly to the role of Hadamard states in defining e.g., the expectation value of the stress--energy operator, (renormalized) Wick polynomials and time-ordered products, etc., on general curved spacetime backgrounds (see e.g., \parencite{Fullingetal1978,KayWald1991,JanssenVerch2023,FewsterVerch2013}), and another (ii) which appeals to a version of the equivalence principle, according to which the singularity structure of the two-point function of Hadamard states for linear Klein--Gordon fields should `locally approximate' that of the Minkowski vacuum (see e.g., \parencite{KayWald1991,Amir-HomayoonOttewill,FewsterVerch2013,KhavkineMoretti2015}). 

Recent developments have made it possible to bring under technical control and substantially strengthen and generalise both these arguments; doing so is the aim of the present paper. The microlocal reformulation of the Hadamard condition for Klein--Gordon fields due to \textcite{Radzikowski} paved the way for an important first step towards clarifying (i), which was the observation by \textcite{DuetschFredenhagen2001,DuetschFredenhagen2001-2} (see also e.g., \parencite{FredenhagenRejzner2015,Duetsch}) that, in the context of the perturbative algebraic approach to the quantization of linear Klein--Gordon fields, the Hadamard condition is sufficient for the existence of a unital associative star product on a sufficiently large space of observables of the Klein--Gordon field---essentially, large enough to encompass non-linear polynomials of fields and their derivatives taken pointwise---satisfying some further conditions (concerning, roughly, when that star product defines a deformation quantization of the Klein--Gordon field, and whether the resulting quantum field theory admits states which are `analogues' of the classical vacuum in a precise sense). Recently, \textcite{Fewster2025Hadamard} has extended the microlocal definition of Hadamard states to a more general class of Green-hyperbolic operators (see e.g., \parencite{Baer2015,Khavkine2014}) going well beyond the case of Klein--Gordon theory and other normally-hyperbolic operators. Building on this, the main result of \S\ref{sec:operatorstohadamard} is to generalise the above observation to the class of operators considered in \parencite{Fewster2025Hadamard}, and additionally to prove the converse---i.e., for any such Green-hyperbolic operator, the Hadamard condition is both necessary and sufficient for the existence of a unital associative star product on a sufficiently large space of observables of its associated quantum field, satisfying some further conditions. (To streamline the discussion, this is shown only for the bosonic case in \S\ref{sec:operatorstohadamard}; the extension to fermionic quantization is discussed in Appendix \ref{app:fermions}.) This further clarifies the significance and desirability of the Hadamard condition: if one seeks to quantize perturbatively a Green-hyperbolic operator in such a way that the resulting quantum field theory both has, as observables, non-linear polynomials of fields and their derivatives taken pointwise, and admits any `classical vacuum-like' states at all, then those `classical vacuum-like' states are necessarily Hadamard.

Separately, \textcite{MarchWeatherallFun1} (see also \parencite{March2025BPhil}) have recently provided a mathematically-precise statement of the equivalence principle, which could serve as the basis for a rigorous treatment of (ii). In \S\ref{sec:equivalenceagain}, I undertake such a treatment. The main result of this section is to prove that any state the singularity structure of whose two-point function results from an application of the equivalence principle to that of the Minkowski vacuum of the Klein--Gordon field is Hadamard, at least under some further conditions (those considered in the discussion of (i)). (Actually, the result of \S\ref{sec:equivalenceagain} is slightly more general than this.) This brings, for the first time, technical control to the relationship between the equivalence principle and Hadamard condition along the lines of (ii), as well as opening up several new avenues for further research surrounding this relationship. 

As such, the structure of this paper is as follows. \S\ref{sec:prelims} provides some relevant background on the quantization of Green-hyperbolic operators in the perturbative algebraic setting, and the generalised definition of Hadamard states in this setting given by \textcite{Fewster2025Hadamard}. In \S\ref{sec:operatorstohadamard}, I prove that the bosonic quantization procedure applied to a sufficiently large class of observables leads to Hadamard states and \emph{vice versa} (Appendix \ref{app:fermions} discusses the extension of this result to fermionic quantization). In \S\ref{sec:equivalenceagain}, I introduce the mathematical statement of the equivalence principle in \parencite{March2025BPhil,MarchWeatherallFun1}, and show that applying this to the singularity structure of the Minkowski vacuum of the Klein--Gordon field also leads to Hadamard states (under some further conditions). \S\ref{sec:conclusion} concludes with a discussion.

\section{Quantization of Green-hyperbolic operators in the perturbative algebraic setting}\label{sec:prelims}

\subsection{Preliminaries}

We work with a finite rank complex vector bundle $V\rightarrow E\overset{\pi}{\rightarrow} M$ with dual bundle $E^*$ over a globally-hyperbolic, time-oriented, and oriented Lorentzian $n$-manifold $(M,g)$, where $M$ is assumed paracompact and with finitely many connected components. We write $\mathrm{dvol}_g$ for the density on $M$ induced by $g$. $E$ is assumed to be \emph{Hermitian}, i.e., equipped with a smooth non-degenerate bundle metric $(\,\cdot\,,\,\cdot\,)$, and a base-preserving antiunitary involutive bundle automorphism $C:E\rightarrow E$ (complex conjugation) such that $(C(v),C(w))_x=\overline{(v,w)}{}_x$, and thus a bilinear bundle metric $\langle\,\cdot\,,\,\cdot\,\rangle$ via $\langle v,w\rangle_x:=(C(v),w)_x$. Note that $(\,\cdot\,,\,\cdot\,)$, $\langle\,\cdot\,,\,\cdot\,\rangle$ extend to (respectively sequilinear, bilinear) pairings on sections of $E$ with compactly intersecting supports via integration against $\mathrm{dvol}_g$; the pairing of sections of $E$ and sections of $E^*$ with compactly intersecting supports via integration against $\mathrm{dvol}_g$ is denoted $\langle\langle\,\cdot\,,\,\cdot\,\rangle\rangle$. Note also that $\langle\,\cdot\,,\,\cdot\,\rangle$ induces a vector bundle isomorphism $\sharp:E\rightarrow E^*$ with inverse $\flat$ via $\langle\langle v^\sharp, w\rangle\rangle:=\langle v,w\rangle_x$ for any $v,w\in\pi^{-1}(x)$ (in particular, making $E^*$ into a Hermitian vector bundle).

For any set $S\subset M$, $J^\pm(S)$ denotes its causal future/past. $J^\pm M$ denotes the bundle of (future, past) causal covectors on $M$; $N^\pm M$ denotes the bundle of (future, past) null covectors on $M$; vector bundles overset with a dot denote removal of the zero section. For any finite rank vector bundle $B$ over $(M,g)$, $\Gamma(B)$ denotes its space of smooth sections, $\Gamma_\mathrm{c}(B)$ its space of smooth sections with compact support, $\Gamma_{\mathrm{sc},\pm}(B)$ its space of smooth sections with support in $J^\pm(S)$ for some compact $S\subset M$, and $\Gamma_\mathrm{sc}(B)$ its space of smooth sections with support in $J(S)$ for some compact $S\subset M$. (These are all Fr\'echet topological vector spaces, see Lemma 2.3 of \parencite{Baer2015}.) 

Given any two Hermitian vector bundles $B_1, B_2$ and a linear map $S:\Gamma(B_1)\rightarrow\Gamma(B_2)$, the \emph{formal Hermitian adjoint} of $S$, if it exists, is the unique linear map ${}^\dag\! S:\Gamma(B_2)\rightarrow\Gamma(B_1)$ such that $(f_1,S(f_2))_2=({}^\dag\! S(f_1),f_2)_1$ for any $f_1\in \Gamma(B_1)$, $f_2\in \Gamma(B_2)$ such that $S(f_1)$ and $f_2$ have compactly intersecting supports. The \emph{formal transpose} ${}^t\! S$ of $S$, if it exists, is defined analogously, via $\langle f_1,S(f_2)\rangle_2=\langle{}^t\! S(f_1),f_2\rangle_1$. Likewise, the \emph{formal dual} ${}^\star\! S$ of $S$, if it exists, is defined via $\langle\langle f_1,S(f_2)\rangle\rangle_2=\langle\langle{}^\star\! S(f_1),f_2\rangle\rangle_1$. We also write $\overline{S}=C_2SC_1$.

\subsection{Green-hyperbolic operators}

References for this section include \parencite{Baer2015,Khavkine2014,Fewster2025Hadamard}. We consider the general class of linear differential operators $P:\Gamma(E)\rightarrow\Gamma(E)$.\footnote{For Hermitian vector bundles over oriented Lorentzian manifolds, this includes all linear differential operators which arise as the Euler--Lagrange operator of some Lagrangian.} Any such operator admits a formal Hermitian adjoint. If ${}^\dag\! P=P$, we say that $P$ is \emph{formally Hermitian}. Now suppose that $P$ is formally Hermitian and that there exist linear maps $G^\pm_P:\Gamma_\mathrm{c}(E)\rightarrow\Gamma_{\mathrm{sc},\pm}(E)$ such that for any $f\in\Gamma_{\mathrm{c}}(E)$, (i) $PG^\pm_P(f)=f=G^\pm_PP(f)$, and (ii) $\mathrm{supp}(G^\pm_P(f))\subset J^\pm(\mathrm{supp}(f))$. Then $P$ is a \emph{formally-Hermitian Green-hyperbolic operator} and $G^\mp_P$ are called the \emph{advanced} (respectively \emph{retarded}) \emph{Green functions} for $P$. (Note that the advanced and retarded Green functions of a Green-hyperbolic operator are unique. They are also not, strictly speaking, functions but distributional (generalised) sections.) Their difference $G_P:=G^-_P-G^+_P:\Gamma_\mathrm{c}(E)\rightarrow\Gamma_{\mathrm{sc}}(E)$ is called the \emph{causal Green function}. In what follows, we will also assume that $P$ is real, i.e., $P=\overline{P}:=CPC$. In that case, one has that ${}^tG^\pm_P=G^\mp_P$ and ${}^tG_P=-G_P$, as well as $P G_P(f)=\mathbf{0}=G_PP(f)$ for any $f\in \Gamma_\mathrm{c}(E)$. This allows us to introduce the \emph{causal propagator} for $P$, which is an antisymmetric bilinear form $\Delta_P(\,\cdot\,,\,\cdot\,)$ defined via
\begin{equation*}
    \Delta_P(f,g):=\langle f, G_P(g)\rangle,
\end{equation*}
for all $f,g\in\Gamma_\mathrm{c}(E)$. Note that for any $f,g\in\Gamma_\mathrm{c}(E)$, $\Delta_P(f,P(g))=0=\Delta_P(P(f),g)$ and $\Delta_P(C(f),C(g))=\overline{\Delta_P(f,g)}$. The causal propagator can be regarded as a distributional section of the external tensor product bundle $E^*\boxtimes E^*$.

\subsection{Wavefront sets, $\mathcal{V}^\pm$-decomposability, and Hadamard two-point functions}

To discuss the Hadamard condition in this general context, we need some tools from microlocal analysis. References for this section include \parencite{Hoermander1,Radzikowski,Fewster2025Hadamard}. We begin by introducing the machinery of wavefront sets, due to \textcite{Hoermander1}, which allow us to probe the `directions' in which a distribution is singular.\footnote{Various definitions of the wavefront set are available, see e.g., \parencite{Brouder2014}; the presentation here is adapted from \parencite{Fewster2025Hadamard}.} Let $u$ be a distribution on $M$ and let $(x,k)\in \dot{T}^*M$. Fix a coordinate system $X_\alpha$ in a neighbourhood $U$ of $x$. We say that $(x,k)$ is \emph{non-characteristic} for $u$ iff there is a smooth density $\chi$ with compact support on $U$ and an open cone $V$ in $\dot{T}_x^*M$ such that $\chi(x)\neq \mathbf{0}$, $k\in V$, and 
\begin{equation*}
    \underset{l\in V}{\mathrm{sup}}(1+||l||^N)|u(\chi e_l)|<\infty,
\end{equation*}
for all $N\in\mathbb{N}$ and all $l\in T^*_xM$, where $e_l(p)=e^{iX_\alpha(p)l(\partial_{X_\alpha})}$ and $||\cdot||$ is any norm on $\dot{T}_x^*M$; otherwise $(x,k)$ is \emph{characteristic}.

The characteristic covectors of a distribution track the directions in which $u$ fails to be regular. The \emph{wavefront set} $\mathrm{WF}(u)$ of $u$ is the (closed, conic) subset of $\dot{T}^*M$ of all characteristic covectors for $u$. We write $\mathrm{WF}(u)(x):=\dot{T}^*_xM\cap \mathrm{WF}(u)$. Note that $\mathrm{WF}(u)$ is independent of the choice of coordinate system. The wavefront set extends to the case of distributions acting on sections of complex vector bundles with finite rank, defining:
\begin{equation*}
    \mathrm{WF}(u)=\underset{\varphi\in\Gamma(E^*)}{\bigcup}\mathrm{WF}(f\mapsto u(f\varphi)).
\end{equation*}

Wavefront sets allows us to say when the product of two distributions is well-defined. We focus on the general case of distributions in multiple variables, which is most relevant for the subsequent discussion. Let $V_1\rightarrow E_1\rightarrow M_1$, $V_2\rightarrow E_2\rightarrow M_2$, $V_3\rightarrow E_3\rightarrow M_3$ be complex vector bundles of finite rank. If $u$ is a distributional section of $E_1\boxtimes E_2$ write
\begin{equation*}
    \mathrm{WF}_{M_2}(u):=\{(x_2,k_2)|(x_1,\mathbf{0};x_2,k_2)\in\mathrm{WF}(u)\;\mathrm{for}\;\mathrm{some}\;x_1\in M_1\}.
\end{equation*}
Now let $u_1,u_2$ be distributional sections of $E_1\boxtimes E_2$, $E_2^*\boxtimes E_3$ respectively, and suppose that the fibrewise sum $\mathrm{WF}_{M_2}(u_1)+\mathrm{WF}_{M_2}(u_2)$ does not intersect the zero section of $T^*M_2$ (H\"ormander's criterion). Then their pointwise partial product on $M_2$ is well-defined \parencite{Hoermander1}, and is given via
\begin{equation*}
    (u_1u_2)(f_1,f_3):=\langle\langle u_1(f_1,\cdot\,),u_2(\,\cdot,f_3)\rangle\rangle_2
\end{equation*}
for compactly supported sections $f_1,f_3$ of $E_1^*, E_3^*$ respectively. 

Wavefront sets also put bounds on the singular structure of sums, derivatives, and products of distributions. Three cases are most relevant for us here (\S 8.1 of \parencite{Hoermander1}):
\begin{itemize}
    \item $\mathrm{WF}(u+v)\subseteq \mathrm{WF}(u)\cup\mathrm{WF}(v)$;
    \item If $\nabla$ is any connection on $E$ and $\xi$ is any vector field on $M$, $\mathrm{WF}(\nabla_\xi u)\subset\mathrm{WF}(u)$;
    \item If $u_1,u_2$ are distributional sections of $E_1\boxtimes E_2$, $E_2^*\boxtimes E_3$, and H\"ormander's criterion for the existence of the partial product $u_1u_2$ is met, then $\mathrm{WF}(u_1u_2)(x_1,x_3)\subseteq\{(x_1,k_1;x_3,k_3)\in T^*M\times T^*M: ((x_1,k_1;x_2,k_2)\in\mathrm{WF}(u_1)(x_1,x_2)\;\mathrm{and}\;(x_2,-k_2;x_3,k_3)\in\mathrm{WF}(u_2)(x_2,x_3))\;\mathrm{or}\; (k_1=\mathbf{0}\;\mathrm{and}\;(x_2,\mathbf{0};x_3,k_3)\in\mathrm{WF}(u_2)(x_2,x_3))\;\mathrm{or}\; (k_3=\mathbf{0}\;\mathrm{and}\;(x_1,k_1;x_2,\mathbf{0})\in\mathrm{WF}(u_1)(x_1,x_2))\}$.
\end{itemize}

Following \parencite{Fewster2025Hadamard}, these tools allow us to define an important subclass of real formally-Hermitian Green-hyperbolic operators. Namely, we say that a real formally-Hermitian Green-hyperbolic operator $P$ on a Hermitian vector bundle $E$ is $\mathcal{V}^\pm $\emph{-decomposable} if there exists a conic, relatively closed subset $\mathcal{V}^+$ of $\dot{T}^*M$ (so that $\mathcal{V}^+\cap \mathcal{V}^-=\varnothing$, where $\mathcal{V}^-:=-\mathcal{V}^+$) such that 
\begin{equation*}
    \mathrm{WF}(\Delta_P)\subset(\mathcal{V}^+\times \mathcal{V}^-)\cup(\mathcal{V}^-\times \mathcal{V}^+).
\end{equation*}
Roughly, one can think of $\mathcal{V}^+$-decomposability as saying that the singular structure of $\Delta_P$ can be split into a well-defined `future-past' and `past-future' component, but where these are not required to be defined with respect to the metric lightcones. In particular, for a $\mathcal{V}^+$-decomposable real formally-Hermitian Green-hyperbolic operator, this allows us to define a causal (i.e., lightcone-like) structure associated to its causal propagator. That is, we say that a covector $k$ at $x\in M$ is $\mathcal{V}$\emph{-causal} if it lies in the (fibrewise) convex hull $\mathcal{U}^\pm$ of $\mathcal{V}^\pm$ and $\mathcal{V}$\emph{-spacelike} otherwise. Following \parencite{Fewster2025Hadamard}, we can now introduce the Hadamard condition in this generalized context. Let $E$ be a Hermitian vector bundle and $P$ a $\mathcal{V}^\pm$-decomposable real formally-Hermitian Green-hyperbolic operator thereon. We say that a distributional section $\Delta_H$ of $E^*\boxtimes E^*$ is a (bosonic) $\mathcal{V}^+$\emph{-Hadamard} two-point function for $P$ iff
\begin{itemize}
    \item (Antisymmetric part) $\Delta_H-{}^t\Delta_H=\Delta_H-\overline{\Delta}_H=2i\Delta_P$;
    \item (Wavefront set spectrum condition (WFSSC)) Its wavefront set is a subset of one causal half of that of the causal propagator for $P$, i.e.,
    \begin{equation*}
        \mathrm{WF}(\Delta_H)\subset\mathcal{V}^+\times\mathcal{V}^-;
    \end{equation*}
    \item (Bi-solution) It is a distributional bi-solution of $P$, in the sense that for any $f,g\in\Gamma_c(E)$, $\Delta_H(P(f),g)=0=\Delta_H(f,P(g))$; and
    \item (Positivity) It is positive semi-definite, i.e.\ for any $f\in \Gamma_\mathrm{c}(E)$, $\Delta_H(\overline{f},f)\geq 0$.
\end{itemize}

We defer discussion of the relationship between a (bosonic) Hadamard two-point function and a Hadamard state to \S\ref{sec:operators}.

\subsection{Observables}\label{sec:observables}
We now discuss how a real formally-Hermitian Green-hyperbolic field theory can be used to construct a bosonic quantum field theory (for the extension to the fermionic case, see Appendix \ref{app:fermions}). We begin by defining an appropriate space of observables. References for this section include \parencite{Duetsch,DuetschFredenhagen2001,DuetschFredenhagen2001-2,Rejzner2022,FredenhagenRejzner2015,Fewster2025Hadamard,Khavkine2014,Beninietal2015}. Given a Hermitian vector bundle $E$, an \emph{observable} on $E$ in general is a continuous map $A:\Gamma(E)\rightarrow\mathbb{C}$. In what follows, we will focus on observables which are \emph{polynomial}, in the sense of being formal power series in sections of $E$.\footnote{Recall that that a power series is formal means that no requirement is made that it be convergent.} In some detail, this means that for any polynomial observable $A$ there exist compactly-supported distributional sections $\alpha^{(i)}$ of $\boxtimes^{i}_\mathrm{sym}E$, $i=0,1,2,...$, such that $A(f)=\sum_{i=0}\langle\alpha^{(i)},\otimes^{i}f\rangle$ for all $f\in\Gamma(E)$, where the subscript $\mathrm{sym}$ denotes symmetrization. The space of polynomial observables on $E$ is denoted $\mathcal{P}(E)$. For the space of polynomial observables which are formal power series in some parameter, $\hbar$, we write $\mathcal{P}(E)[[\hbar]]$.

Two subspaces of $\mathcal{P}(E)$ will be of interest to us. First, the \emph{regular polynomial observables}, denoted $\mathcal{P}_\mathrm{reg}(E)$, for which the coefficients $\alpha^{(i)}$ are all regular distributions. This is rather restrictive. Second, for any conic, relatively closed $\mathcal{V}^+\subset \dot{T}^*M$ the $\mathcal{V}$\emph{-microcausal polynomial observables}, denoted $\mathcal{P}_{\mathcal{V}\mathrm{mc}}(E)$,\footnote{The definition of a $\mathcal{V}$-microcausal observable generalises the definition of a `microcausal observable' in \parencite{FredenhagenRejzner2015,Rejzner2022}, as well as the definition of a `field' in \parencite{Duetsch}.} for which the wavefront set of each $\alpha^{(i)}$ is disjoint from those points where all $i$ wavevectors are in the closed future cone or closed past cone of $\mathcal{V}$-causal covectors, i.e.,
\begin{equation*}
    \mathrm{WF}(\alpha^{(i)})\cap (\times^i\mathcal{U}^+\cup \times^i\mathcal{U}^-)=\varnothing,
\end{equation*}
where as before $\mathcal{U}^\pm$ is the (fibrewise) convex hull of $\mathcal{V}^\pm$. Note that for any $\mathcal{V}$-microcausal polynomial observable, $\alpha^{(1)}$ is either regular or its wavefront set is $\mathcal{V}$-spacelike. The most obvious example of a $\mathcal{V}$-microcausal polynomial observable which is not a regular polynomial observable is the squared field amplitude, for which $\alpha^{(2)}(x,y)=\delta(x,y)\langle\,\cdot\,,\,\cdot\,\rangle$. More generally, $\mathcal{P}_{\mathcal{V}\mathrm{mc}}(E)$ (but not $\mathcal{P}_\mathrm{reg}(E)$) is rich enough to encompass polynomials of fields and their derivatives which are taken pointwise.

We now consider dynamics. Let $P$ be a real formally-Hermitian Green-hyperbolic operator on $E$. For $A_1, A_2\in \mathcal{P}(E)$, define $A_1\sim_PA_2$ iff $A_1(f)=A_2(f)$ for all $f\in\mathrm{ker}(P)$. The space of \emph{on-shell} (regular, $\mathcal{V}$-microcausal) polynomial observables for $P$, denoted $\mathcal{P}(E,P)$ (respectively $\mathcal{P}_\mathrm{reg}(E,P)$, $\mathcal{P}_{\mathcal{V}\mathrm{mc}}(E,P)$) is the quotient $\mathcal{P}(E)/\!\!\sim_P$ (respectively $\mathcal{P}_\mathrm{reg}(E)/\!\!\sim_P$, $\mathcal{P}_{\mathcal{V}\mathrm{mc}}(E)/\!\!\sim_P$). It is easy to see (using that $P$ is formally Hermitian) that $A_1\sim_PA_2$ iff $A=A_1-A_2$ is such that $\alpha^{(i)}\in\mathrm{im}(P\otimes(\otimes^{i-1}\mathrm{id}))$ for all $i$. Thus, denoting as $\mathcal{I}^P(E)$ the space of all observables for which $\alpha^{(i)}\in\mathrm{im}(P\otimes(\otimes^{i-1}\mathrm{id}))$ for all $i$, we see that $\mathcal{P}(E,P)\simeq\mathcal{P}(E)/\mathcal{I}^P(E)$ (\emph{mutatis mutandis} for regular, microcausal polynomial observables). 

Any $\mathcal{V}$-decomposable real formally-Hermitian Green-hyperbolic operator $P$ on $E$ defines a \emph{Poisson--Peierls} bracket $[\,\cdot\,,\,\cdot\,]_P$ (see \parencite{Peierls1952}) on $\mathcal{P}_{\mathcal{V}\mathrm{mc}}(E)$ via
\begin{equation}
    [A_1,A_2]_P:=\mathrm{prod}\circ \left\langle \Delta_P, \delta\otimes\delta\right\rangle A_1\otimes A_2,\label{eq:Peierls}
\end{equation}
where $\mathrm{prod}$ is the multiplication operator, i.e., $\mathrm{prod}\circ(A_1\otimes A_2)(f)=A_1(f)A_2(f)$, and $\delta$ denotes the functional derivative, so that $(\delta A(f))(g)=\sum_{i=1}\langle\alpha^{(i)},g\otimes (\otimes^{i-1}f)\rangle$. It is straightforward to check using H\"ormander's criterion that $[\,\cdot\,,\,\cdot\,]_P$ is indeed well-defined on $\mathcal{P}_{\mathcal{V}\mathrm{mc}}(E)$.\footnote{The argument is a straightforward modification of that given explicitly in e.g., \S4 of \parencite{FredenhagenRejzner2015}, where $\mathcal{V}^\pm$ is no longer required to be $J^\pm$.} Note that $[\,\cdot\,,\,\cdot\,]_P$ descends to a symplectic form on $\mathcal{P}_{\mathcal{V}\mathrm{mc}}(E,P)$ \parencite{Fewster2025Hadamard,Duetsch,Khavkine2014}. 

\subsection{Operator products}\label{sec:operators}

We now discuss the construction of an appropriate operator product. References for this section include \parencite{Duetsch,DuetschFredenhagen2001,DuetschFredenhagen2001-2,Rejzner2022,FredenhagenRejzner2015}. We begin with a general discussion of the situation for $\mathcal{P}_\mathrm{reg}(E)$.
Any distributional section $\Pi$ of $E^*\boxtimes E^*$ induces a unital associative algebra product on $\mathcal{P}_\mathrm{reg}(E)[[\hbar]]$, denoted $\star_\Pi$, via
\begin{equation}
    A_1\star_\Pi A_2:=\mathrm{prod}\circ\mathrm{exp} \left(\hbar \left\langle \Pi, \delta\otimes\delta\right\rangle\right)A_1\otimes A_2\label{eq:starprod}
\end{equation}
where $\mathrm{exp}$ is in the sense of formal power series. In bosonic quantization, we require that this product satisfy the following conditions:
\begin{enumerate}
    \item $\star_\Pi$ is a star product, i.e., $\overline{(A_1\star_\Pi A_2)}=\overline{A}_2\star_\Pi\overline{A}_1$;
    \item $\star_\Pi$ descends to $\mathcal{P}_\mathrm{reg}(E,P)[[\hbar]]$, i.e., $\mathcal{I}^P_\mathrm{reg}$ is a two-sided ideal of $\star_\Pi$; and
    \item $\star_\Pi$ is a deformation quantization of $[\,\cdot\,,\,\cdot\,]_P$, i.e., as $\hbar\rightarrow 0$, $A_1\star_{\Pi}A_2\rightarrow \mathrm{prod}(A_1\otimes A_2)$ and $(A_1\star_{\Pi}A_2-A_2\star_{\Pi}A_1)/i\hbar\rightarrow [A_1,A_2]_{P}.$
\end{enumerate}

In preparation for \S\ref{sec:operatorstohadamard}, we note that (1) is the case iff $\Pi$ is formally Hermitian; (2) is the case iff $\Pi(P(f),g)=0=\Pi(P(g),f)$ for all $f,g\in\Gamma_c(E)$; and (3) is the case iff $\Pi-{}^t\Pi=i\Delta_P$.\footnote{The first and third of these claims follow straightforwardly from \eqref{eq:Peierls} and \eqref{eq:starprod} by definition of the formal ajoint/formal transpose respectively. The second claim follows from the series expansion of \eqref{eq:starprod}, noting that, if $A_1\in \mathcal{I}^P_\mathrm{reg}(E)$, then each $i$th-order term in this expansion has coefficients which either lie in $\mathrm{im}(P\otimes(\otimes^{i-1}\mathrm{id}))$ automatically (since those of $A_1$ do), or which result from evaluating one argument of $\Pi$ against the first argument of some $\alpha_1^(i)\in \mathrm{im}(P\otimes(\otimes^{i-1}\mathrm{id}))$ against $\Pi$ in its first argument. If $\Pi(P(f),g)=\mathbf{0}=\Pi(P(g),f)$ for all $f,g\in\Gamma(E)$ then these terms vanish. Conversely, If $A_1$ is linear and $A_2$ is quadratic, we see in particular that $\Pi(P(f),g)g\in\mathrm{im}(P)$ for any $f,g\in\Gamma_c(E)$. This implies that $\Pi(P(f),g)g=\mathbf{0}$ for any $f,g\in\Gamma_c(E)$. The same reasoning applied to $A_2$ establishes the result for the second argument of $\Pi$.} The most obvious choice of $\Pi$ satisfying these conditions is simply to take $\Pi=\frac{i}{2}\Delta_P$. Evidently, however, this choice will not work if we wish to extend this treatment to larger spaces of observables, such as $\mathcal{P}_{\mathcal{V}\mathrm{mc}}(E)$.

Finally, note that any bosonic $\mathcal{V}^+$-Hadamard two-point function $\Delta_H$ defines a (quasifree) Hadamard \emph{state} on the algebra $(\mathcal{P}_\mathrm{reg}(E,P)[[\hbar]], \star_{\frac{i}{2}\Delta_P})$, with $\Delta_H$ as the two-point function of that state \parencite{Fewster2025Hadamard,Duetsch,FredenhagenRejzner2015}, where we recall that a state on a unital star algebra $(\mathcal{A}, \star)$ is a linear map $\omega:\mathcal{A}\rightarrow \mathbb{C}$ taking the multiplicative unit to $1$ and such that $\omega(\overline{A}\star A)\geq 0$ for all $A\in\mathcal{A}$. For the relationship between this definition and the definition in terms of the `Hadamard form' in e.g., \parencite{KayWald1991,Fullingetal1978,Wald1994}, see \parencite{Radzikowski,Fewster2025Hadamard}.

\section{From operator products to the Hadamard condition}\label{sec:operatorstohadamard}

We now discuss in more detail extensions of the quantization procedure in \S\ref{sec:operators} to $\mathcal{P}_{\mathcal{V}\mathrm{mc}}(E)$. We have the following result:
\begin{thm}\label{thm:operatorstohadamard}
    Let $E$ be a Hermitian vector bundle, $\Pi$ a distributional section of $E^*\boxtimes E^*$, and $P$ a $\mathcal{V}^\pm$-decomposable real formally-Hermitian Green-hyperbolic operator on $E$. The following conditions are equivalent:
    \begin{itemize}
        \item $\Pi$ is a (bosonic) $\mathcal{V}^+$-Hadamard or $\mathcal{V}^-$-Hadamard two-point function for $P$.
        \item The unital associative algebra product $\star_{\Pi}$ induced by $\Pi$ is well-defined on $\mathcal{P}_{\mathcal{V}\mathrm{mc}}(E)[[\hbar]]$, in the sense that H\"ormander's criterion is satisfied, and has the following properties:
        \begin{enumerate}
            \item $\star_\Pi$ is a star product on $\mathcal{P}_{\mathcal{V}mc}(E)[[\hbar]]$;
            \item $\star_\Pi$ descends to $\mathcal{P}_{\mathcal{V}mc}(E,P)[[\hbar]]$;
            \item $\star_\Pi$ is a deformation quantization of $[\,\cdot\,,\,\cdot\,]_P$; and
            \item The map $\langle\,\cdot\,\rangle_\mathbf{0}:\mathcal{P}_{\mathcal{V}mc}(E)[[\hbar]]\rightarrow \mathbb{C}[[\hbar]]$, $\langle A\rangle_\mathbf{0}:=A(\mathbf{0})$ is a state on the algebra $(\mathcal{P}_{\mathcal{V}mc}(E)[[\hbar]],\star_\Pi)$.
        \end{enumerate}
    \end{itemize}
\end{thm}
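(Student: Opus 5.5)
We prove the two implications separately. Throughout we use the facts recorded in \S\ref{sec:operators}: for the product \eqref{eq:starprod}, condition (1) holds iff $\Pi$ is formally Hermitian, (2) holds iff $\Pi(P(f),g)=0=\Pi(f,P(g))$, and (3) holds iff $\Pi-{}^t\Pi=i\Delta_P$. The arguments given there for $\mathcal{P}_\mathrm{reg}(E)$ are purely algebraic once the product exists. So on $\mathcal{P}_{\mathcal{V}\mathrm{mc}}(E)[[\hbar]]$ they carry over verbatim provided H\"ormander's criterion holds, and then all contractions are well-defined partial products. Two points need extra care: the normalization (the paper writes $2i\Delta_P$ for Hadamard two-point functions but $i\Delta_P$ for (3)), and the reality condition $\Delta_H-\overline{\Delta}_H$. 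I would fix the normalization by identifying $\Pi$ with $\tfrac12\Delta_H$ up to the paper's convention, and check that ``formally Hermitian'' for $\Pi$ is equivalent to ${}^t\Pi=\overline{\Pi}$. With these in place, (1)--(3) are equivalent to the antisymmetric-part and bi-solution conditions.

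\emph{Hadamard $\Rightarrow$ (1)--(4).} Assume $\mathrm{WF}(\Pi)\subset\mathcal{V}^+\times\mathcal{V}^-$; the $\mathcal{V}^-$ case is symmetric.

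First we check H\"ormander's criterion. The $k$-th order term of $A_1\star_\Pi A_2$ contracts $k$ arguments of $\alpha_1^{(i)}$ with $k$ arguments of $\alpha_2^{(j)}$ via $\otimes^k\Pi$. By the product wavefront-set bound, an obstruction requires covectors $(k_1,\dots,k_k)$ in $\mathrm{WF}(\alpha_1^{(i)})$ and $(-k'_1,\dots,-k'_k)$ in $\mathrm{WF}(\alpha_2^{(j)})$, with each pair $(k_l,k'_l)\in\mathcal{V}^+\times\mathcal{V}^-$ (allowing some components zero). The contracted covectors of $\alpha_1^{(i)}$ then lie in $\mathcal{V}^-$ after the sign flip, and those of $\alpha_2^{(j)}$ in $\mathcal{V}^+$. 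Non-contracted slots carry zero covectors, and zero lies in the convex hulls $\mathcal{U}^\pm$. Hence the point lies in $\times^i\mathcal{U}^-$ or $\times^j\mathcal{U}^+$, which the microcausality condition excludes. This is the argument of \S4 of \parencite{FredenhagenRejzner2015} with $J^\pm$ replaced by $\mathcal{V}^\pm$, and the same estimate shows that the product stays in $\mathcal{P}_{\mathcal{V}\mathrm{mc}}$.

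Next we check (4). Expanding $\langle\overline{A}\star_\Pi A\rangle_\mathbf{0}$, only the terms coming from the degree-$k$ parts of $A$ survive, giving $\sum_k \hbar^k k!\,\langle\otimes^k\Pi,\overline{\alpha^{(k)}}\otimes\alpha^{(k)}\rangle$. The tensor power $\otimes^k\Pi$ is positive semidefinite on test sections by positivity of $\Pi$ together with the Schur product theorem. It then extends by continuity to the microcausal coefficients, whose pairing is well-defined by the wavefront argument above. Normalization $\langle 1\rangle_\mathbf{0}=1$ is immediate.

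\emph{(1)--(4) $\Rightarrow$ Hadamard.} By the equivalences recalled at the start, (1)--(3) give the antisymmetric-part and bi-solution conditions. Positivity follows from (4) applied to the linear observable $A(f)=\langle\alpha,f\rangle$, since $\langle\overline{A}\star_\Pi A\rangle_\mathbf{0}=\hbar\,\Pi(\overline{\alpha},\alpha)$.

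The main obstacle is the wavefront set spectrum condition, to be extracted from well-definedness of $\star_\Pi$ on all of $\mathcal{P}_{\mathcal{V}\mathrm{mc}}$. The tool is the quadratic observable $A$ with $\alpha^{(2)}=\delta\langle\cdot,\cdot\rangle$, whose wavefront set is $\{(x,k;x,-k)\}$. The $\hbar^2$ term of $A\star_\Pi A$ contains the pointwise square $\Pi(x,y)^2$ (or $\Pi\,{}^t\Pi$). H\"ormander's criterion for this product forbids $(x,k;y,l)$ and $(x,k';y,l')$ both lying in $\mathrm{WF}(\Pi)$ with $k+k'=0=l+l'$. Hence $\mathrm{WF}(\Pi)\cap(-\mathrm{WF}(\Pi))=\varnothing$.

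We also need an upper bound on $\mathrm{WF}(\Pi)$. The antisymmetric part gives $\Pi-{}^t\Pi=i\Delta_P$, with $\mathrm{WF}(\Delta_P)\subset(\mathcal{V}^+\times\mathcal{V}^-)\cup(\mathcal{V}^-\times\mathcal{V}^+)$. The bi-solution property combined with propagation of singularities then confines $\mathrm{WF}(\Pi)$ to this set, since $\Pi$ can be written as $\Delta_P$ composed with smooth corrections. Granting this confinement, the set is a union of its $\mathcal{V}^+\times\mathcal{V}^-$ and $\mathcal{V}^-\times\mathcal{V}^+$ components, and it must cover $\mathrm{WF}(\Delta_P)$. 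The disjointness from the negated set then forces, on each connected piece, a choice of exactly one component, and a connectedness argument makes the choice global.

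I expect the confinement step to be the delicate one. For a general Green-hyperbolic $P$, propagation of singularities is unavailable, so I would first try to derive the confinement directly by testing $\star_\Pi$ on products of linear observables with suitably chosen microcausal $\alpha^{(2)}$, localized at arbitrary $\mathcal{V}$-spacelike covectors. Failing that, I would fall back on the structure of bi-solutions used in \parencite{Fewster2025Hadamard}.
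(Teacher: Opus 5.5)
Your forward direction, and your derivation of (antisymmetric part), (bi-solution) and (positivity) in the converse, agree with the paper. The gap is in extracting WFSSC, and it has two parts.

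The first part is the confinement step. There you reach for propagation of singularities and a representation of $\Pi$ as $\Delta_P$ plus smooth corrections. Neither is available for a general Green-hyperbolic $P$, and neither is needed. The reality condition ${}^t\Pi=\overline{\Pi}$ that you extract from (1) gives $\mathrm{WF}({}^t\Pi)=\mathrm{WF}(\overline{\Pi})=-\mathrm{WF}(\Pi)$. So your squared-field result says precisely $\mathrm{WF}(\Pi)\cap\mathrm{WF}({}^t\Pi)=\varnothing$. Now write $\Pi={}^t\Pi+i\Delta_P$ and use $\mathrm{WF}(u+v)\subseteq\mathrm{WF}(u)\cup\mathrm{WF}(v)$. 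Together with the disjointness, this gives $\mathrm{WF}(\Pi)\subseteq\mathrm{WF}(\Delta_P)\subset(\mathcal{V}^+\times\mathcal{V}^-)\cup(\mathcal{V}^-\times\mathcal{V}^+)$ at once. This is the paper's step via \eqref{eq:antisymmetric}. In your version it also makes the paper's separate exclusions of $\mathcal{V}$-spacelike covectors and of covectors with a zero component unnecessary.

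The second part is the globalization, and here your tools are too weak. The squared-field observable only compares wavefront points over the same base point $(x,y)$ whose covectors are exact negatives of each other. So it shows at most that $\mathrm{WF}(\Pi)$ is a union of connected components of $\mathrm{WF}(\Delta_P)$, each lying in one half. Nothing makes either half connected. $M$ may be disconnected, and even on connected $M$ there is a counterexample to your route. Take $P=P_1\oplus(-P_2)$, with $P_i$ normally hyperbolic for metrics with distinct, strictly nested cones, and $\mathcal{V}^+=\dot N^+_{g_1}\cup\dot N^+_{g_2}$. Then $\Pi=\Delta^{(1)}_H\oplus{}^t\Delta^{(2)}_H$ satisfies everything your argument extracts from (1)--(4). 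Yet it is neither $\mathcal{V}^+$- nor $\mathcal{V}^-$-Hadamard.

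The missing idea is the paper's condition (c). Use the $\hbar^2$ term with general quadratic microcausal coefficients $\alpha^{(2)}_1(x,z)$ and $\alpha^{(2)}_2(y,u)$ that link different base points (and, in the example, different fibre components). By H\"ormander's Theorem 8.1.4 these can be chosen so that $(x,-k_x;z,k_z)\in\mathrm{WF}(\alpha^{(2)}_1)$ and $(y,-k_y;u,k_u)\in\mathrm{WF}(\alpha^{(2)}_2)$, with $k_z$ co-oriented with $k_x$ and $k_u$ with $k_y$; microcausality allows this. Contracting $\Pi(x,y)$ first yields a distribution in $(z,u)$. Its pairing with $\Pi(z,u)$ violates H\"ormander's criterion whenever $\mathrm{WF}(\Pi)$ contains $(x,k_x;y,k_y)$ together with some $(z,k'_z;u,k'_u)$ that is oppositely oriented in both slots, at arbitrary base points. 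Combined with the confinement above, this forces all of $\mathrm{WF}(\Pi)$ into a single half, with no connectedness needed.
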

\begin{proof}
    We know from $\S\ref{sec:operators}$ that (1)--(3) are the case iff $\Pi-{}^t\Pi=\Pi-\overline{\Pi}=2i\Delta_P$ and $\Pi(P(f),g)=0=\Pi(P(g),f)$ for all $f,g\in\Gamma_c(E)$, i.e., (1)--(3) are equivalent to (antisymmetric part) and (bisolution). It follows from (antisymmetric part) that     \begin{equation}
        \mathrm{WF}(\Pi)=\mathrm{WF}(2i\Delta_P+{}^t\Pi)=(\mathrm{WF}(\Delta_P)\cup\mathrm{WF}({}^t\Pi))\cap\mathrm{WF}(\Pi).\label{eq:antisymmetric} 
    \end{equation}
    Moreover, it is straightforward to see that, assuming $\star_\Pi$ is well-defined on $\mathcal{P}_{\mathcal{V}\mathrm{mc}}(E)[[\hbar]]$, $\langle\,\cdot\,\rangle_\mathbf{0}$ is a state on $\mathcal{P}_{\mathcal{V}\mathrm{mc}}(E,P)[[\hbar]]$ iff (positivity) holds. It is clear that $\langle\,\cdot\,\rangle_\mathbf{0}$ is linear and takes the multiplicative unit to $1$. For positivity, we have explicitly
    \begin{equation*}
        \langle \overline{A}\star_\Pi A\rangle_\mathbf{0}=\overline{\alpha}{}^{(0)}\alpha^{(0)}+\sum_{i=1}\frac{\hbar^i}{i!}\langle\langle \overline{\alpha}{}^{(i)},\Pi^i\alpha^{(i)}\rangle\rangle,
    \end{equation*}
    from which it is clear that (positivity) implies $\langle \overline{A}\star_\Pi A\rangle_\mathbf{0}\geq 0$ and \emph{vice versa}.
    
    Thus, using \eqref{eq:antisymmetric}, it suffices to show that $\star_\Pi$ is well-defined on $\mathcal{P}_{\mathcal{V}\mathrm{mc}}(E)[[\hbar]]$, in the sense that H\"ormander's criterion is satisfied, iff the following conditions are met:
    \begin{itemize}
        \item[(a)] $\mathrm{WF}(\Pi)$ contains no $\mathcal{V}^\pm$-spacelike covectors (in either argument);
        \item[(b)] $\mathrm{WF}(\Pi)$ contains no covectors of the form $(x,\mathbf{0};y,k_y)$ or $(x,k_x;y,\mathbf{0})$; and
        \item[(c)] For any $(x,k_x;y,k_y)\in\mathrm{WF}(\Pi)$, there is no $(z,k_z;u,k_u)\in\mathrm{WF}(\Pi)$ such that $k_x$ and $k_z$ are $\mathcal{V}^\pm$ oppositely oriented and $k_y$ and $k_u$ are $\mathcal{V}^\pm$ oppositely oriented.
    \end{itemize}
    It is clear that (a)--(c) are satisfied if $\Pi$ satisfies WFSSC for $\mathcal{V}^+$ or $\mathcal{V}^-$. For the converse, note that (a)--(c) imply $\mathrm{WF}(\Pi)\cap\mathrm{WF}({}^t\Pi)=\varnothing$, so that, by \eqref{eq:antisymmetric}, $\mathrm{WF}(\Pi)\subseteq\mathrm{WF}(\Delta_P)$. Thus by (c), for any $(x;y)\in M\times M$, $\mathrm{WF}(\Pi)(x;y)\subset\mathcal{V}^+_x\times\mathcal{V}^-_y$ or $\mathrm{WF}(\Pi)(x;y)\subset\mathcal{V}^-_x\times\mathcal{V}^+_y$, and also for any $(x;y), (z;u)\in M\times M$, $\mathrm{WF}(\Pi)(x;y)\subset\mathcal{V}^+_x\times\mathcal{V}^-_y$ iff $\mathrm{WF}(\Pi)(z;u)\subset\mathcal{V}^+_z\times\mathcal{V}^-_u$. Thus $\mathrm{WF}(\Pi)\subset \mathcal{V}^+\times\mathcal{V}^-$ or $\mathrm{WF}(\Pi)\subset \mathcal{V}^-\times\mathcal{V}^+$ giving us WFSSC. We now take (a)--(c) in turn. 

    Consider the series expansion of \eqref{eq:starprod}. For (a), consider the term in this expansion of the form $\mathrm{prod}\circ\langle \Pi(x,y),\alpha^{(1)}_1(x) \otimes\alpha^{(1)}_2(y)\rangle$. For $A_1, A_2\in \mathcal{P}_{\mathcal{V}\mathrm{mc}}(E)[[\hbar]]$, we know that $\mathrm{WF}(\alpha^{(1)}_1)$, $\mathrm{WF}(\alpha^{(1)}_2)$ are $\mathcal{V}$-spacelike or empty. Suppose for contradiction that there exists some $(x,k_x;y,k_y)\in\mathrm{WF}(\Pi)$ with $k_x$ or $k_y$ $\mathcal{V}$-spacelike. By Theorem 8.1.4 of \parencite{Hoermander1}, we know that $A_1$, $A_2$ can be chosen so that $\mathrm{WF}(\alpha^{(1)}_1)$, $\mathrm{WF}(\alpha^{(1)}_2)$ contain $(x,-k_x)$, $(y,-k_y)$ respectively. In this case, H\"ormander's criterion for the existence of the product $\mathrm{prod}\circ\langle \Pi,\alpha^{(1)}_1\otimes \alpha^{(1)}_2\rangle$ is not met.

    We now consider the term in the series expansion of \eqref{eq:starprod} of the form $\mathrm{prod}\circ\langle \Pi(x,y)\otimes\Pi(z,u),\alpha^{(2)}_1(x,z)\otimes \alpha^{(2)}_2(y,u)\rangle$. We evaluate each partial product in steps. For (b), suppose for contradiction that there exists some $(x,k_x;y,k_y)\in\mathrm{WF}(\Pi)$ with $k_x=\mathbf{0}$ or $k_y=\mathbf{0}$. Then by Theorem 8.1.4 of \parencite{Hoermander1}, we may choose $A_1$, $A_2$ such that $(x,-k_x;z,\mathbf{0})\in\mathrm{WF}(\alpha_1^{(2)})$ for some $u\in M$, $(y,-k_y;u,\mathbf{0})\in\mathrm{WF}(\alpha_2^{(2)})$ for some $z\in M$ respectively. In this case, H\"ormander's criterion for the existence of the partial product $\langle \Pi(x,y),\alpha^{(2)}_1(x,z)\alpha^{(2)}_2(y,u) \rangle$ is not met. 

    Now consider $\langle \Pi(x,y),\alpha^{(2)}_1(x,z)\alpha^{(2)}_2(y,u)\rangle$. By (b), this has wavefront set constrained by $\mathrm{WF}(\langle \Pi(x,y),\alpha^{(2)}_1(x,z)\alpha^{(2)}_2(y,u)\rangle)\subseteq\{ (z,k_z;u,k_u): k_z,k_u\neq\mathbf{0}\;\mathrm{and}\;((x,k_x;y,k_y)\in \mathrm{WF}(\Delta_H(x,y))\;\mathrm{and}\; (x,-k_x;z,k_z)\in \mathrm{WF}(\alpha^{(2)}_1)\;\mathrm{and}\; (y,-k_y;u,k_u)\in \mathrm{WF}(\alpha^{(2)}_2))\}$, where we have made use of the definition of a $\mathcal{V}$-microcausal polynomial observable. Thus we see by definition of a $\mathcal{V}$-microcausal polynomial observable that for any such $(z,k_z;u,k_u)$ contained in this set with $k_z,k_u$ $\mathcal{V}$-causal (which is the only case that matters, by (a)), $k_z$ is $\mathcal{V}$-co-oriented with $k_u$ iff $k_x$ is $\mathcal{V}$-co-oriented with $k_y$. So again by Theorem 8.1.4 of \parencite{Hoermander1}, we see that H\"ormander's criterion for the existence of $\mathrm{prod}\circ\langle \Pi(x,y)\otimes\Pi(z,u),\alpha^{(2)}_1(x,z)\otimes \alpha^{(2)}_2(y,u)\rangle$ is met only if for any $(x,k_x;y,k_y)\in\mathrm{WF}(\Pi)$, there is no $(z,k_z;u,k_u)\in\mathrm{WF}(\Pi)$ such that $k_x$ is $\mathcal{V}$-oppositely-oriented with $k_z$ and $k_y$ is $\mathcal{V}$-oppositely-oriented with $k_u$, which gives us (c). Moreover, the same arguments used for (a)--(c) clearly suffice to establish that H\"ormander's criterion is met for every term in the series expansion of \eqref{eq:starprod} whenever $A_1, A_2\in \mathcal{P}_{\mathcal{V}\mathrm{mc}}(E)[[\hbar]]$. 
\end{proof}

The lesson of this result is as follows. Suppose that one seeks to define a bosonic (again, see Appendix \ref{app:fermions} for the extension to the fermionic case) perturbative quantum field theory on $\mathcal{P}_{\mathcal{V}\mathrm{mc}}(E)$, i.e., a perturbative quantum field theory which has, as observables, non-linear polynomials of fields and their derivatives, which is a deformation quantization of the classical field theory $P$, and has a well-defined restriction on-shell. Suppose moreover that one wants this quantum field theory to admit states which are `analogues' of the classical vacuum, in the sense that they consist in evaluating all observables against the classical vacuum field configuration. Then those `vacuum-like' states are necessarily Hadamard (where we have made use of the fact that the state $\langle\,\cdot\,\rangle_\mathbf{0}$ considered in theorem \ref{thm:operatorstohadamard} has $\Pi$ as its two-point function). In other words, Theorem \ref{thm:operatorstohadamard} provides a precise sense in which `vacuum-like' (\emph{per} condition (4)) states of `physically reasonable' (\emph{per} conditions (1)-(3)) and `sufficiently rich' (\emph{per} being defined on $\mathcal{P}_{\mathcal{V}\mathrm{mc}}(E)[[\hbar]]$) quantum field theories are necessarily Hadamard. This, I take it, is a substantial strengthening and clarification of motivation (i) discussed in the introduction. 

\section{From the equivalence principle to the Hadamard condition}\label{sec:equivalenceagain}

I now turn to the relationship between the equivalence principle and the Hadamard condition. This requires us to work at a slightly greater level of generality than we have been thus far. Let $\mathsf{Man}$ be the category whose objects are smooth $n$-manifolds, and whose arrows are smooth embeddings. Let $\mathsf{FB}$ be the category of fibre bundles whose arrows are smooth bundle morphisms. A \emph{natural bundle} over $\mathsf{Man}$ is a functor $\mathscr{B}:\mathsf{Man}\rightarrow\mathsf{FB}$ such that for any object $M$ in $\mathsf{Man}$, $\mathscr{B}M$ is a bundle with base space $M$, and for any morphism $\varphi:M\rightarrow M'$ in $\mathsf{Man}$, $\mathscr{B}\varphi$ is a smooth bundle morphism covering $\varphi$ which is a fibrewise diffeomorphism. Observe that any natural bundle over $\mathsf{Man}$ is also a natural bundle over $\mathsf{Loc}$, the category whose objects are smooth, globally hyperbolic, time-oriented Lorentzian manifolds, and whose arrows are smooth, causal, orientation-preserving, isometric embeddings (the converse is not true). 

Now suppose we have a functor which assigns, to every object $(M,g)$ in $\mathsf{Loc}$ a subset $\mathscr{S}(M,g)$ of a natural bundle $\mathscr{B}M$ (over $\mathsf{Man}$). Then the equivalence principle says that $\mathscr{S}$ depends only on the metric and orientation to first order, in the sense that whenever two objects $(M,g)$, $(M',g')$ in $\mathsf{Loc}$ agree on the metric and orientation to first order at a point (under some local diffeomorphism), then they also agree on $\mathscr{S}$ at that point (under the same local diffeomorphism). To make this more intuitive, note that this is equivalent to saying that for any object $(M,g)$ in $\mathsf{Loc}$, and any $x\in M$, $\mathscr{S}(M,g)$ agrees (at $x$) with the subset of $\mathscr{B}$ which $\mathscr{S}$ assigns to any flat metric $g'$ defined on a neighbourhood $U$ of $x$ and agreeing with $g$ to first order at $x$ (and hence `locally approximates' $\mathscr{S}(U,g')$ in sufficiently small neighbourhoods of $x$) \parencite{March2025BPhil,MarchWeatherallFun1}. (Note that we cannot say this using \emph{only} the category $\mathsf{Loc}$. Note also that this implies that $\mathscr{S}$ must be functorial over $\mathsf{Loc}$.) Letting $\mathscr{S}$ assign to each $(M,g)$ a closed embedded submanifold of a jet bundle, i.e., a partial differential equation, at each $(M,g)$, one recovers the more standard idea of `equivalence principle-satisfying' matter dynamics `locally approximating' those in Minkowski spacetime. 

Following \parencite{March2025BPhil,MarchWeatherallFun1}, to make this precise we proceed as follows (presented here in a slightly generalized form). Let $\mathscr{B}:\mathsf{Man}\rightarrow\mathsf{FB}$, let $\mathscr{F}:\mathsf{FB}\rightarrow\mathsf{Set}$ be the forgetful functor which takes each fibre bundle to its underlying set, and let $\mathscr{L}:\mathsf{Loc}\rightarrow\mathsf{Man}$ be the forgetful functor which takes each globally hyperbolic, time-oriented Lorentzian manifold to its underlying smooth manifold. Let $\mathscr{S}:\mathsf{Loc}\rightarrow\mathsf{Set}$ be a functor, and $s:\mathscr{S}\rightarrow\mathscr{F}\circ\mathscr{B}\circ\mathscr{L}$ be a monic natural transformation. We say that $(\mathscr{S},s)$ \emph{satisfies the equivalence principle} iff, for any objects $(M,g)$, $(M',g')$ in $\mathsf{Loc}$, any $x\in M$, $x'\in M'$ any $U\subset M$, $U'\subset M'$ containing $x$, $x'$ respectively, and any diffeomorphism $\varphi:U\rightarrow U'$ such that $\varphi(x)=x'$, $(\varphi_*g)(x')=g'(x')$, $(\varphi_*\nabla)(x')=\nabla'(x')$ (where $\nabla$, $\nabla'$ are the Levi-Civita connections for $g$, $g'$ respectively), and which preserves the time orientation at $x$, $\mathscr{B}\varphi((s_{(M,g)}\circ\mathscr{S}(M,g))\cap\pi^{-1}_{\mathscr{B}M}(x))=(s_{(M',g')}\circ\mathscr{S}(M',g'))\cap\pi^{-1}_{\mathscr{B}M'}(x')$.

We now seek to apply this to the Hadamard condition. Recall the motivating idea behind the equivalence principle argument for the Hadamard condition (ii) in \S\ref{sec:intro}: the singularity structure of a Hadamard two-point function for the Klein--Gordon field ought to `look' like that of the Minkowski vacuum locally. We cannot apply the above definition of the equivalence principle to a Hadamard two-point function directly (since Hadamard two-point functions are distributional, and in any case, cannot in general be functorial over $\mathsf{Loc}$ \parencite{FewsterVerch2012localitycovariance,Fewster2018states}). But we \emph{can} apply it to the wavefront set of a distribution on each object $(M,g)$ of $\mathsf{Loc}$ (providing those wavefront sets define a natural bundle over $\mathsf{Loc}$). Now, this will not quite work for making sense of the idea of a wavefront set of a Hadamard two-point function satisfying the equivalence principle, since formally, the definition of a natural bundle only applies to bundles defined over $M$ (whilst $\mathrm{WF}(\Delta_H)$ is defined over $M\times M$), but the fix is straightforward. We start by amending our definition of a natural bundle to consider bundles defined over multiple copies of $M$. That is, a $k$-natural bundle (over $\mathsf{Man}$) is a functor $\mathscr{B}:\mathsf{Man}\rightarrow\mathsf{FB}$ such that for any object $M$ in $\mathsf{Man}$, $\mathscr{B}M$ is a bundle with base space $\times^kM$, and (ii) for any morphism $\varphi:M\rightarrow M'$ in $\mathsf{Man}$, $\mathscr{B}\varphi$ is a smooth bundle morphism covering $\times^k\varphi$ which is a fibrewise diffeomorphism. Then given any $k$-natural bundle $\mathscr{B}:\mathsf{Man}\rightarrow\mathsf{FB}$, functor $\mathscr{S}:\mathsf{Loc}\rightarrow\mathsf{Set}$, and monic natural transformation $s:\mathscr{S}\rightarrow\mathscr{F}\circ\mathscr{B}\circ\mathscr{L}$ we can say that $(\mathscr{S},s)$ satisfies the equivalence principle iff for any objects $(M,g)$, $(M',g')$ in $\mathsf{Loc}$, any $x_1,...,x_k\in M$, $x'_1,...,x'_k\in M'$ any $U\subset M$, $U'\subset M'$ containing $x_1,...,x_k$, $x'_1,...,x'_k$ respectively, and any diffeomorphism $\varphi:U\rightarrow U'$ such that $\varphi(x_i)=x'_i$, $(\varphi_*g)(x'_i)=g'(x'_i)$, $(\varphi_*\nabla)(x'_i)=\nabla'(x'_i)$ for all $i=1,...,k$, and which preserves the time orientation at $x_1,...,x_k$, $\mathscr{B}\varphi((s_{(M,g)}\circ\mathscr{S}(M,g))\cap\pi^{-1}_{\mathscr{B}M}(x_1,...,x_k))=(s_{(M',g')}\circ\mathscr{S}(M',g'))\cap\pi^{-1}_{\mathscr{B}M'}(x'_1,...,x'_k)$.\footnote{One could consider a weaker extension of the equivalence principle to the $k$-natural bundles context, which would apply only to the case where $x_1=...=x_k$ and $x'_1=...=x'_k$, i.e., to $k$-tuples of points on the diagonals in $\times^kM$, $\times^kM'$. (Note that this would not be sufficient for theorem \ref{thm:SEPforhadamard} stated below.) However, there is reason to think that the statement of the equivalence principle presented in the main text better captures the idea behind the statement of the equivalence principle in \parencite{March2025BPhil,MarchWeatherallFun1}. That is, the idea of the equivalence principle is that points which agree on the metric to first order should agree on the fibre of $\mathscr{S}$ at those points, so that $\mathscr{S}$ depends only on the metric and orientation to first order. To capture the same notion of dependence only on the metric and orientation to first order in the $k$-natural bundles context, one should consider points in $\times^kM$ regardless of whether they lie on the diagonal or not. I am grateful to Jeff Barrett for helpful comments and discussion on this point.}

We can now make precise sense of the idea that the Hadamard condition, at least in some special cases, is entailed by the equivalence principle applied to the singularity structure of the Minkowski vacuum. The proof relies on the following result due to \parencite{FletcherWeatherallpart1} (which I have translated into the notation used in this paper):
\begin{thm}[Fletcher and Weatherall, 2023]\label{thm:FletcherWeatherall}
    Let $(M,g)$, $(M',g')$ be two Lorentzian $n$-manifolds, let $x\in M$, $x'\in M'$ and let $\gamma':I'\rightarrow M'$ be a smooth embedded curve whose image contains $x'$. Then there exist a smooth embedded curve $\gamma:I\rightarrow M$ whose image contains $x$, neighbourhoods $U$, $U'$ of $x$, $x'$ respectively, and a diffeomorphism $\varphi:U'\rightarrow U$ such that $\varphi(x)=x'$, $\varphi\circ\gamma'=\gamma$ on $I$, and along $\gamma[I]\cap U$, $\varphi_*g'=g$ and $\varphi_*\nabla'=\nabla$, where $\nabla$, $\nabla'$ are the Levi-Civita connections for $g$, $g'$ respectively.
\end{thm}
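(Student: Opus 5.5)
The statement contains a typo: it should read $\varphi(x')=x$, since $\varphi:U'\rightarrow U$. The plan is to build $\varphi$ explicitly in coordinates, adapted to $\gamma'$, as a map of the form ``curve plus linear plus quadratic in the transverse variables''. The coefficients are fixed by solving a linear ODE along the curve.

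\emph{Step 1 (straightening).} Since $\gamma'$ is a smooth embedded curve, after shrinking $I'$ to a compact subinterval around the parameter value $t=0$ of $x'$, I choose coordinates $(t,y^1,\dots,y^{n-1})$ on a neighbourhood of $x'$ in which $\gamma'(t)=(t,\mathbf{0})$. In these coordinates $g'_{\mu\nu}(t,\mathbf{0})$ and $\Gamma'^\rho_{\mu\nu}(t,\mathbf{0})$ are given smooth functions of $t$. I also fix an arbitrary chart $z^\mu$ around $x$ in $M$ with $x$ at the origin. I then look for $\varphi$ in the form
\begin{equation*}
\varphi(t,y)=c(t)+A(t)(0,y)+\tfrac12 Q(t)(y,y),
\end{equation*}
where $c$ is a curve in the $z$-chart with $c(0)=0$, $A(t)$ is an invertible matrix, and $Q(t)$ is a symmetric vector-valued bilinear form. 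The target curve is then $\gamma:=c=\varphi\circ\gamma'$. Along the axis, $d\varphi$ is the matrix with columns $c'(t)$ and $A(t)e_a$. To simplify, I impose $c'(t)=A(t)e_0$, so that $d\varphi|_{(t,\mathbf{0})}=A(t)$.

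\emph{Step 2 (zeroth and first order).} Matching the metric along the curve means
\begin{equation*}
A^t\, g(c(t))\, A=g'(t,\mathbf{0}).
\end{equation*}
Matching the connection means, by the transformation law of Christoffel symbols,
\begin{equation*}
\partial_\mu\partial_\nu\varphi^\rho=A^\rho{}_\sigma\Gamma'^\sigma_{\mu\nu}-\Gamma^\rho_{\alpha\beta}A^\alpha{}_\mu A^\beta{}_\nu
\end{equation*}
along the axis. The purely transverse components $(a,b)$ of this condition are free, and I simply define $Q(t)$ by them. The components involving $t$ are not free: $\partial_t\partial_\mu\varphi=A'(t)e_\mu$. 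So they impose a linear first-order ODE system for the pair $(c,A)$:
\begin{equation*}
A'{}^\rho{}_\mu=A^\rho{}_\sigma\Gamma'^\sigma_{t\mu}-\Gamma^\rho_{\alpha\beta}(c)A^\alpha{}_tA^\beta{}_\mu,\qquad c'=Ae_0.
\end{equation*}
This system is smooth in $(c,A)$, and geometrically it is just parallel transport of the frame along the curve being built. I solve it with initial data $c(0)=0$ and $A(0)$ any linear isometry from $(T_{x'}M',g')$ to $(T_xM,g)$ that preserves time orientation; such an isometry exists because both spaces are Lorentzian. Shrinking $I'$ if needed, a solution exists on the whole interval.

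\emph{Step 3 (main obstacle: the metric condition persists).} The step I expect to require the real argument is showing that $A^tg(c)A=g'$ holds for all $t$, not just at $t=0$. Set $D(t):=A^tg(c)A-g'$ and differentiate. Substituting the ODE, and using metric compatibility of both Levi-Civita connections ($\partial g=\Gamma g+g\Gamma$ for each), every term cancels except those linear in $D$. Hence $D$ satisfies a homogeneous linear ODE with $D(0)=0$, so $D\equiv 0$. I would write this cancellation out carefully, since it is the only place where the Levi-Civita property is used. Time orientation is preserved by continuity, because $A(t)$ remains an isometry throughout.

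\emph{Step 4 (local diffeomorphism).} At each point of the axis $d\varphi=A(t)$ is invertible. Moreover $\varphi$ is injective on the compact axis segment, since $c$ is an integral curve whose velocity $Ae_0$ never vanishes; after shrinking, $c$ is embedded. A standard argument (the inverse function theorem combined with a compactness argument for injectivity near an embedded compact set) then shows that $\varphi$ is a diffeomorphism from a tubular neighbourhood $U'$ of $\gamma'[I']$ onto its image $U$. By construction $\varphi\circ\gamma'=\gamma$, and $\varphi_*g'=g$ and $\varphi_*\nabla'=\nabla$ along $\gamma$.
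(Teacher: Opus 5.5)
The paper does not prove this statement. It quotes it from Fletcher and Weatherall, and its only addition is the remark, with a pointer to Iliev's work on normal frames, that $\varphi$ can be chosen time-orientation preserving along the curve. Your argument is therefore an independent, self-contained proof, and it is correct. Your correction of the typo to $\varphi(x')=x$ is also right.

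What your route buys is an explicit description of the curve $\gamma$. Your ODE says exactly that $E_\mu(t):=A(t)e_\mu$ satisfies $\nabla_{\dot c}E_\mu=\Gamma'^\sigma_{t\mu}E_\sigma$. So $A(t)$ carries parallel fields along $\gamma'$ to parallel fields along $c$, i.e.\ $A(t)=\tau_t\circ A(0)\circ(\tau'_t)^{-1}$ with $\tau,\tau'$ the two parallel transports, while $\dot c=A(t)\dot\gamma'$. Hence $\gamma=c$ is the Cartan development of $\gamma'$ into $M$ fixed by the initial isometry $A(0)$. This has several consequences:
\begin{itemize}
\item It explains why $\gamma$ is an output of the theorem rather than an input: first-order agreement along a curve forces $\gamma$ to be this development.
\item It gives Step 3 in one line, since parallel transport for a metric connection is isometric.
\item It makes the paper's time-orientation remark automatic: take $A(0)$ orthochronous.
\item It shows the statement is genuinely local in the curve parameter. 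The development can leave the chart, or $M$ itself, in finite parameter time. That is why only $\gamma[I]\cap U$, with $I$ possibly smaller than $I'$, can be claimed.
\end{itemize}
In the case the paper actually uses later ($M$ Minkowski in inertial coordinates), $\Gamma\equiv0$ and your system reduces to $c'=Ae_0$, $A'=A\Gamma'_t$. This is linear, so it is solvable on all of $I'$. For geodesic $\gamma'$ the development is a straight line, hence injective. So in that setting your construction works along the whole geodesic segment that the later proof needs.

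Two small corrections.
\begin{itemize}
\item Contrary to what you write in Step 2, the system for $(c,A)$ is not linear in general. The term $\Gamma^\rho_{\alpha\beta}(c)A^\alpha{}_tA^\beta{}_\mu$ is quadratic in $A$ and nonlinear in $c$. This is harmless, because you only use local existence and shrink $I'$. But the shrinking is genuinely necessary, not a convenience.
\item Step 3 is not the only place the Levi-Civita property enters. Torsion-freeness of both connections is what makes the $(a,b)$ right-hand side symmetric, so that it can be realised by a symmetric $Q(t)$. It is also what makes the $(t,a)$ and $(a,t)$ conditions coincide.
\end{itemize}
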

We note that, in the case of time-oriented manifolds, the local diffeomorphism in Theorem \ref{thm:FletcherWeatherall} can always be chosen so as to be orientation-preserving (along $\gamma$). We then have the following result:
\begin{thm}\label{thm:SEPforhadamard}
    Let $\mathscr{E}:\mathsf{Loc}\rightarrow \mathsf{FB}$ be a natural vector bundle, $\mathscr{P}:\mathscr{E}\rightarrow\mathscr{E}$ a natural real formally-Hermitian Green-hyperbolic operator, and let $\Pi$ be an assignment (not necessarily functorial), to every object $(M,g)$ in $\mathsf{Loc}$, of a distributional section of $\mathscr{E}^*(M,g)\boxtimes\mathscr{E}^*(M,g)$. Further suppose that $\mathrm{WF}(\Pi)$ satisfies the equivalence principle, in the sense discussed above. Let $(M,g)$ be Minkowski spacetime. Then if $\mathrm{WF}(\Pi(M,g))=\{(x,k;x',-k')\in\dot{N}M\times\dot{N}M:(x,k)\sim(x',k')\;\mathrm{and}\; k\in \dot{N}^+M\}$, where $\sim$ means `there exists a geodesic from $x$ to $x'$ with cotangent vector $k$ at $x$ and $k'$ at $x'$', the same is true, at least locally, for any other globally hyperbolic, time-oriented $(M',g')$.

    Moreover, the same is true if we replace $\dot{N}$ with $\dot{J}$ and $\dot{N}^+$ with $\dot{J}^+$, and if $\Pi$ satisfies (antisymmetric part) and (positivity) at each spacetime, the result holds globally.
\end{thm}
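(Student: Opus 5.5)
The plan is to transport the Minkowski wavefront set pointwise to $(M',g')$ using Theorem \ref{thm:FletcherWeatherall} together with the $2$-natural-bundle form of the equivalence principle. The wavefront set lives in $\dot{T}^*M\times\dot{T}^*M$, which is the fibre over $M\times M$ of a $2$-natural bundle. So I need, for each pair $(x'_1,x'_2)\in M'\times M'$, a pair $(x_1,x_2)$ in Minkowski space and a local diffeomorphism $\varphi$ matching metric and connection at both points.

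\textbf{Step 1 (null-related pairs).} Fix $(x'_1,x'_2)$ in a sufficiently small neighbourhood, say a geodesically convex normal neighbourhood, so that distinct points are joined by a unique geodesic $\gamma'$. Apply Theorem \ref{thm:FletcherWeatherall} to $\gamma'$ to get a curve $\gamma$ in Minkowski space and a diffeomorphism $\varphi$, defined on a neighbourhood of $\gamma'$, that agrees with $g$ and $\nabla$ to first order along the whole curve. I choose $\varphi$ orientation-preserving along $\gamma$, as the paper remarks is possible.

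The key observation is that first-order agreement along $\gamma'$ forces $\varphi_*\nabla'=\nabla$ on $\gamma$. Hence $\varphi$ maps the geodesic $\gamma'$ to a curve that is a geodesic of Minkowski space, i.e.\ a straight line. Since $\varphi_*g'=g$ along it, it maps null geodesics to null geodesics. The same holds for the parallel transport of tangent covectors, so $\mathscr{B}\varphi$ maps pairs $(x'_1,k'_1;x'_2,-k'_2)$ with $(x'_1,k'_1)\sim(x'_2,k'_2)$ to the analogous Minkowski pairs, and preserves $\dot N^+$. The equivalence principle then gives
\[
\mathrm{WF}(\Pi(M',g'))(x'_1,x'_2)=\mathscr{B}\varphi^{-1}\bigl(\mathrm{WF}(\Pi(M,g))(x_1,x_2)\bigr),
\]
and the right-hand side is exactly the required set.

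\textbf{Step 2 (other pairs and the diagonal).} Pairs with $x'_1\neq x'_2$ that are not null-related fall under the same argument. Minkowski gives the empty fibre there, and $\varphi$ preserves the causal character of the connecting geodesic at the endpoints. So the fibre is empty on $M'$ too, modulo the subtlety that $\sim$ on $M'$ refers to the geodesic, which is unique only locally; this is why the result is stated locally. For $x'_1=x'_2$, a constant-curve version suffices, namely Riemann normal coordinates, which is the $k=1$ case.

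The $\dot J$ variant is identical, since timelike and null geodesics and their cotangents are both matched.

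\textbf{Step 3 (globalisation).} Assuming (antisymmetric part) and (positivity), I would invoke the propagation of singularities. This requires bisolution, which I expect to extract from the natural Green-hyperbolic setting, or to take from the local bisolution structure. I would then use Radzikowski's local-to-global type argument: a positive two-point function satisfying the local WFSSC and whose antisymmetric part is $2i\Delta_P$ has global WFSSC, via the inclusion $\mathrm{WF}(\Pi)\subseteq \mathrm{WF}(\Delta_P)\cup\mathrm{WF}({}^t\Pi)$. Positivity rules out the ${}^t\Pi$ half, much as in the proof of Theorem \ref{thm:operatorstohadamard}. The result is the global form of $\mathrm{WF}(\Delta_P)$'s future component, which is the full Hadamard set.

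I expect the main obstacle to be Step 1. It must be justified carefully that $\mathscr{B}\varphi$, which is the cotangent lift of $\varphi$ at two distinct points, intertwines the relation $\sim$. This uses that first-order agreement along the entire curve, not just at the endpoints, preserves the geodesic equation and parallel transport. I would verify it by writing the geodesic and transport ODEs in terms of Christoffel symbols along $\gamma$.

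Second, the globalisation step silently needs the bisolution property, which is not in the stated hypotheses. If it cannot be derived, I would restrict the global claim accordingly.
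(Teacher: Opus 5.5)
Your proposal follows the paper's proof essentially step for step: you work in a geodesically normal neighbourhood, apply the Fletcher--Weatherall diffeomorphism along the geodesic joining the two base points, and use the two-point equivalence principle to transport the Minkowski fibre; you treat the $\dot{J}$ case identically and globalise via Radzikowski's local-to-global theorem, and the only differences are that you merge the paper's separate $\supseteq$ and $\subseteq$ arguments into a single fibrewise equality and spell out (via the geodesic and transport equations along the curve) why first-order agreement along $\gamma'$ intertwines $\sim$ and preserves $\dot{N}^+$, which the paper simply asserts. Your caution that the global step, in the form Radzikowski proves it, presupposes a bisolution not listed among the stated hypotheses is a fair remark about the statement itself, not a gap in your argument relative to the paper's, which invokes that theorem without comment.
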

\begin{proof}
    Let $p'\in M'$ and let $U'\subset M'$ be (an open subset of) a geodesically normal neighbourhood of $p'$ (such always exist locally). We begin with the `$\supseteq$' direction. Let $(x,k;x',-k')\in\dot{N}M'\times\dot{N}M'$ restricted to $U'$ be such that $(x,k)\sim(x',k')$ and $k\in \dot{N}^+M'$. Let $\gamma':I'\rightarrow U'$ be a smooth embedded curve witnessing the relationship $(x,k)\sim(x',k')$. Fix any $p\in M$. Theorem \ref{thm:FletcherWeatherall} implies that there exists a neighbourhood $U$ of $p$, a smooth embedded curve $\gamma:I\rightarrow M$ whose image contains $p$ and a local diffeomorphism $\varphi:U'\rightarrow U$ such that $\varphi(x)=p$, $\varphi\circ\gamma'=\gamma$ on $I$, and along $\gamma[I]$, $\varphi_*g'=g$, $\varphi_*\nabla'=\nabla$, and which preserved time orientation there (that this is in fact the case for the whole of $\gamma[I]$ follows since $U'$ is a geodesically normal neighbourhood, see \parencite{Iliev2006}). Since $\mathrm{WF}(\Pi)$ satisfies the equivalence principle, we know that $\varphi_*(\mathrm{WF}(\Pi(M',g'))\cap\pi^{-1}_{T^*M'}(\gamma'[I']\times\gamma'[I']))=\mathrm{WF}(\Pi(M,g))\cap\pi^{-1}_{T^*M}(\gamma[I]\times\gamma[I])$, since for any $y',z'\in \gamma'[I']$, $\varphi_*(\mathrm{WF}(\Pi(M',g'))(y',z'))=\mathrm{WF}(\Pi(M,g))(\varphi(y'),\varphi(z'))$. Moreover, since along $\gamma[I]$, $\varphi_*g'=g$ and $\varphi_*\nabla'=\nabla$, it follows that $(\varphi(x),\varphi_*k;\varphi(x'),-\varphi_*k')\in\dot{N}M\times\dot{N}M$, $(\varphi(x),\varphi_*k)\sim(\varphi(x'),\varphi_*k')$, and $\varphi_*k\in \dot{N}^+M$. So $(\varphi(x),\varphi_*k;\varphi(x'),-\varphi_*k')\in\mathrm{WF}(\Pi(M,g))$ and hence $(x,k;x',-k')\in\mathrm{WF}(\Pi(M',g'))$.

    For the `$\subseteq$' direction, suppose that $(x,k;x',-k')\in \mathrm{WF}(\Pi(M',g'))$ restricted to $U'$. If $(x,k;x',-k')\in\dot{N}M'\times\dot{N}M'$, $(x,k)\sim(x',k')$, and $k\in \dot{N}^+M'$ then we are done, so suppose otherwise. Since $U'$ is geodesically normal, there exists a geodesic $\gamma':I'\rightarrow U'$ whose image contains $x$ and $x'$. Let $p$, $\gamma$, and $\varphi$ be defined subject to the same conditions as before. Again using that $\varphi\circ\gamma'=\gamma$ on $I$, and $\varphi_*g'=g$, $\varphi_*\nabla'=\nabla$ along $\gamma[I]$, we must have  $(\varphi(x),\varphi_*k;\varphi(x'),-\varphi_*k')\notin\dot{N}M\times\dot{N}M$ or  $(\varphi(x),\varphi_*k)$ and $(\varphi(x'),\varphi_*k')$  are not null related, or $\varphi_*k\notin \dot{N}^+M$. But since $\mathrm{WF}(\Pi)$ satisfies the equivalence principle, we know that $\varphi_*(\mathrm{WF}(\Pi(M',g'))\cap\pi^{-1}_{T^*M'}(\gamma'[I']\times\gamma'[I']))=\mathrm{WF}(\Pi(M,g))\cap\pi^{-1}_{T^*M}(\gamma[I]\times\gamma[I])$, and so $(\varphi(x),\varphi_*k;\varphi(x'),-\varphi_*k')\in\mathrm{WF}(\Pi(M,g))$, contradicting the assumption that $\mathrm{WF}(\Pi(M,g))=\{(x,k;x',-k')\in\dot{N}M\times\dot{N}M:(x,k)\sim(x',k')\;\mathrm{and}\; k\in \dot{N}^+M\}$.

    Moreover, the same would clearly follow if we replace $\dot{N}$ with $\dot{J}$ and $\dot{N}^+$ with $\dot{J}^+$ throughout. The remainder of the result follows from the local-to-global theorem of \textcite{Radzikowski1996}.
\end{proof}
In other words, given any assignment to every (globally hyperbolic, time-oriented) spacetime of a candidate two-point function whose wavefront set satisfies the equivalence principle, if its wavefront set agrees with that of the Minkowski vacuum of the Klein--Gordon field at Minkowski spacetime, the same holds, at least locally, at every other (globally hyperbolic, time-oriented) spacetime. As such, Theorem \ref{thm:SEPforhadamard} provides, for the first time, a precise way of making sense of the idea that the Hadamard condition can be motivated via appeal to the equivalence principle, applied to the singular structure of the Minkowski vacuum two-point function, i.e., argument (ii) from \S\ref{sec:intro}. 

\section{Discussion}\label{sec:conclusion}

Theorems \ref{thm:operatorstohadamard} and \ref{thm:SEPforhadamard} substantially clarify and generalise both the standard arguments for the Hadamard condition (i) and (ii) discussed in \S\ref{sec:intro}. They also open up several new directions for future research, especially regarding the relationship between the equivalence principle and Hadamard condition, and more generally the role of the equivalence principle in quantum field theory on curved spacetime. To end, I sketch some of these directions.

As noted in \S\ref{sec:equivalenceagain}, the definition of the equivalence principle in \parencite{March2025BPhil,MarchWeatherallFun1} cannot be applied directly to distributional quantities. This is because the equivalence principle concerns a property of certain functorial objects (fibre bundles, geometric object fields, etc.) which holds at a point---\textit{viz.}\ agreement at that point whenever two metrics agree to first order there---and there is no unambiguous definition of what it is for two distributions to `agree at a point' (unless those distributions are regular).\footnote{In particular, note that the pullback of a distribution $u$ on $M$ to the $0$-dimensional manifold along some (smooth) embedding is well-defined iff the image of that embedding lies outside of the singular support of $u$, see theorem 8.2.4 of \parencite{Hoermander1}.} However, many important objects in the theory of quantum fields on curved spacetime, such as the propagators of a real formally-Hermitian Green-hyperbolic operator, are distributional. As such, it would be of interest to provide a statement of the equivalence principle appropriate for distributional sections, e.g., perhaps as measured by some family of test functions, where the degree of approximation of the equivalence principle improves as the support of those test functions decreases, or for distributions considered as an appropriate scaling limit of some family of (smooth) sections.

Given a statement of the equivalence principle appropriate for distributional sections, one could then seek to prove theorems to the effect of, if $P$ is an equivalence principle-satisfying real formally-Hermitian Green-hyperbolic operator, then e.g., its causal propagator will satisfy the equivalence principle; if it is $\mathcal{V}^\pm$-decomposable, it will be so for such-and-such possible choices of $\mathcal{V}^\pm$ (presumably, $\mathcal{V}^\pm=\dot{N}^\pm$ or $\mathcal{V}^\pm=\dot{J}^\pm$); its associated space of Hadamard two-point functions at each spacetime will also satisfy the equivalence principle (in a sense appropriate for spaces of distributions), and so on. Doing so would provide control over the way in which the order of dependence on the metric in the dynamics of a quantum field theory on curved spacetime constrains other important structural features of that quantum field theory. It would also be a further contribution to the project of articulating clearly the relationship between the equivalence principle and the Hadamard condition. The advantage of theorem \ref{thm:SEPforhadamard} is to highlight exactly where this technical work is needed, and what we might expect it to achieve.  

\section*{Acknowledgments}
This material is based upon work supported by the National Science Foundation under Grant No.\ 2419967. I would like to thank Jeff Barrett, India Bhalla-Ladd, Silvester Borsboom, Chris Fewster, Frank Hu, James Read, Dominic Ryder, Jim Weatherall, and audiences in Lake Bohinj and Irvine for helpful comments and discussions. 

\appendix

\section{Extension of theorem \ref{thm:operatorstohadamard} to fermionic quantisation}\label{app:fermions}

Here, I briefly discuss the changes needed to extend the quantization procedure from \S\ref{sec:operators} to fermionic fields, and resulting modifications to theorem \ref{thm:operatorstohadamard}. References for this section include \parencite{Rejzner2011,Fewster2025Hadamard,Beninietal2015}. In fermionic field theories, we assume that $E$ is additionally equipped with a smooth Hermitian bundle automorphism $R:E\rightarrow E$ such that $RC=-CR$. The first modification is to the space of observables. A \emph{fermionic polynomial observable} is a functional $A:\bigoplus_{i=0}\otimes^i\Gamma_c(E)\rightarrow \mathbb{C}$ such that $A(f)=\sum_{i=0}\langle\alpha^{(i)},f^i\rangle$ for compactly supported distributional sections $\alpha^{(i)}$ of $\boxtimes^i_\mathrm{as}E$, where $f=\bigoplus_{i=0}f^i$ and the subscript $\mathrm{as}$ denotes antisymmetrization. (Observe that a bosonic polynomial observable as defined in \S\ref{sec:observables} is, too, a functional $A:\bigoplus_{i=0}\otimes^i\Gamma_c(E)\rightarrow \mathbb{C}$.) We denote the space of fermionic polynomial observables on $E$ as $\mathcal{P}^f(E)$. A fermionic polynomial observable $A^i:\bigoplus_{i=0}\otimes^i\Gamma_c(E)\rightarrow \mathbb{C}$ such that $A^i(f)=\langle\alpha^{(i)},f^i\rangle$ for some $i\in\mathbb{N}$, we say that $A^i$ is homogeneous of degree $i$. 
The definitions of regular, $\mathcal{V}$-microcausal fermionic polynomial observables are as before; likewise for on-shell observables.

The Poisson--Peierls bracket for a real formally-Hermitian Green-hyperbolic operator $P$ on $\mathcal{P}^f_{\mathcal{V}\mathrm{mc}}(E)$ is defined via
\begin{equation}
    [A_1,A_2]_P:=\mathrm{prod}_\mathrm{as}\circ \left\langle \Delta_P\circ(\mathrm{id}\otimes R), \mathrm{d}_L\otimes\mathrm{d}_R\right\rangle A_1\otimes A_2,\label{eq:Peierlsfermions}
\end{equation}
where $\mathrm{prod}_\mathrm{as}$ is the antisymmetrized multiplication operator, defined by $\mathrm{prod}_\mathrm{as}\circ(A^i_1\otimes A^j_2)(f^i\otimes f^j)=(\alpha^{(i)}_1\wedge\alpha_2^{(j)})(f^i\otimes f^j)$ with $\wedge$ the exterior product, for homogeneous $A^i,A^j\in \mathcal{P}^f_{\mathcal{V}\mathrm{mc}}(E)$ and extended by linearity in each argument, and $\mathrm{d}_L$ (respectively, $\mathrm{d}_R$) denotes the antisymmetrized functional derivative acting from the left (respectively, right), defined via $(\mathrm{d}_L A^i(f^{i-1}))(g^1)=\langle\alpha^{(i)},g\otimes f^{i-1}\rangle$ (respectively, $(\mathrm{d}_R A^i(f^{i-1}))(g^1)=\langle\alpha^{(i)}, f^{i-1}\otimes g\rangle$) for homogeneous $A^i\in \mathcal{P}^f_{\mathcal{V}\mathrm{mc}}(E)$ and again extended by linearity. Observe that if $A_1,A_2$ are homogeneous of odd degrees, $[A_1,A_2]_P$ is an anticommutator. 

In fermionic quantization, we again seek a star product on $\mathcal{P}^f_{\mathcal{V}\mathrm{mc}}(E)$ subject to conditions (1)--(3) from \S\ref{sec:operators}. These are fulfilled under a slight modification to the conditions discussed in \S\ref{sec:operators}, namely, (3) is now the case iff $\Pi-{}^t\Pi=i\Delta_P\circ(\mathrm{id}\otimes R)$. Correspondingly, we have a modified definition of a (fermionic) $\mathcal{V}^+$-Hadamard two-point function $\Delta^f_H$, in which (antisymmetric part) is replaced with $\Delta^f_H-{}^t\Delta^f_H=\Delta^f_H-\overline{\Delta}{}^f_H= i\Delta_P\circ(\mathrm{id}\otimes R)$. Using that $\mathrm{WF}(\Delta_P\circ(\mathrm{id}\otimes R))=\mathrm{WF}(\Delta_P)$, it is straightforward to see that the proof of Theorem \ref{thm:operatorstohadamard} carries over the the fermionic case under these modifications. 

\printbibliography

\end{document}